\newtheorem{theorem}{Theorem}
\newtheorem{remark}{Remark}
\newtheorem{proof}{Proof}
\newtheorem{corol}{Corollary}
\newcommand{\Lc}{\mathcal{L}}
\newcommand{\Xc}{\mathcal{X}}
\newcommand{\Yc}{\mathcal{Y}}
\newcommand{\Rc}{\mathcal{R}}
\newcommand{\Sc}{\mathcal{S}}
\newcommand{\Wc}{\mathcal{W}}
\newcommand{\wyd}{w_{12}}
\newcommand{\wds}{w_{23}}
\newcommand{\wys}{w_{13}}
\newcommand{\ryd}{R_{12}}
\newcommand{\rds}{R_{23}}
\newcommand{\rys}{R_{13}}
\begin{document}

\title{State-Dependent Relay Channel with Private Messages with  Partial Causal and Non-Causal Channel State Information}

\author
{Bahareh Akhbari, Mahtab Mirmohseni, and Mohammad Reza Aref\\
Information Systems and Security Lab (ISSL)\\
Electrical Engineering Department, Sharif University of Technology, Tehran, Iran.\\
Email: b\_akhbari@ee.sharif.edu, mirmohseni@ee.sharif.edu, aref@sharif.edu  
\thanks{This work was partially supported by Iranian
National Science Foundation (INSF) under contract No.
84,5193-2006 and by Iran Telecommunication Research Center (ITRC) under contract No. T500/20958.}}
\maketitle
\begin{abstract}
In this paper, we introduce a discrete memoryless State-Dependent Relay Channel with Private Messages (SD-RCPM) as a generalization of the state-dependent relay channel. We investigate two main cases: SD-RCPM with \textit{non-causal} Channel State Information (CSI), and SD-RCPM with \textit{causal} CSI. In each case, it is assumed that \textsl{partial} CSI is available at the source and relay. For non-causal case, we establish an achievable rate region using Gel'fand-Pinsker type coding scheme at the nodes informed of CSI, and Compress-and-Forward (CF) scheme at the relay. Using Shannon's strategy and CF scheme, an achievable rate region for  causal case is obtained. As an example, the Gaussian version of SD-RCPM is considered, and an achievable rate region for Gaussian SD-RCPM with non-causal perfect CSI only at the source, is derived. Providing numerical examples, we illustrate the comparison between achievable rate regions derived using CF and Decode-and-Forward (DF) schemes.
\end{abstract}

\IEEEpeerreviewmaketitle
\vspace{-3mm}
\section{Introduction}
\fontsize{9.6}{11}
\selectfont
The Relay Channel (RC) \cite{CoveElg79}, is a communication system in which a message is transmitted from the source to the destination with the help of a relay. The Partially Cooperative Relay Broadcast Channel (PC-RBC) studied in \cite{LianVee07, LianKra07}, is a generalization of RC in which the source node also sends a private message intended for the relay node. Therefore, in this channel the relay is also a sink of data. The Relay Channel with Private Messages  (RCPM) studied in \cite{TannNos07}, is a generalization of RC in which the relay is both a source and a sink of data. Hence, RCPM  can be regarded as a generalization of PC-RBC  wherein the relay is also a source of data and sends a private message to the destination. The  RCPM model  fits  networks in which  dedicated relays are not available, and relaying is performed by the nodes that each node is a source and a sink of data. 

In this paper, we assume that RCPM is controlled by random parameters called  channel state and we refer to it as State-Dependent Relay Channel with Private Messages (SD-RCPM).
Recently, state-dependent channels have attracted considerable attention. In these channels, the information on the channel state can be known to the terminals causally or non-causally.  For a comprehensive overview on state-dependent channels see  \cite{Jafa06, KeshSteMer08}. 

Among state-dependent multiuser models, some results have also been obtained for the State-Dependent RC (SD-RC) and State-Dependent PC-RBC (SD-PCRBC) \cite{SiguKim05}-\cite{ZaidVan07} for Gaussian and discrete memoryless cases with causal or non-causal Channel State Information (CSI).
Since RCPM can be regarded as a generalization of RC and PC-RBC, the SD-RCPM can be regarded as a generalization of SD-RC and SD-PCRBC.

In this paper, after introducing SD-RCPM, we investigate two main cases: SD-RCPM with non-causal CSI, and SD-RCPM with causal CSI. In each case, we investigate SD-RCPM with partial CSI at the source and the relay. As special cases it includes three different situations in which perfect CSI is available i) only at the source, ii) only at the relay, and iii) both at the source and the relay. Situations i and ii called \textit{asymmetric} scenarios refer to the cases in which CSI is available only at some of the nodes. Situation iii is called the \textit{symmetric} scenario. In fact in the asymmetric cases, it is assumed that only some of the nodes have the ability or the permission to know CSI.  In \cite{AkhbMirAre09, MirmAkhAre09}, unlike most of the previous works  on SD-RC, we have used CF strategy \cite{CoveElg79} for relaying in SD-RC, and showed  that there exist cases (similar to the classic RC) for which  CF scheme achieves  rates higher than those  derived using DF strategy \cite{CoveElg79} for SD-RC with asymmetric CSI in \cite{ZaidKotLan08b, ZaidVan08b}. Now, in this paper we also focus on  CF relaying scheme. In fact similar to what is shown for classic RC, the CF and DF are both able to outperform each other under different conditions. In  DF (or Partial DF)  scheme, the relay has to decode the whole message intended for the destination (or a part of it). Therefore, when only the source is informed of CSI, due to lack of knowledge of CSI at the relay,  the rate loss is caused at the relay as shown in \cite{ZaidShaPiaVan10}. On the other hand, when only the relay is informed of CSI, the source cannot use CSI and cannot conceive what the relay exactly sends, and as shown in \cite{ZaidKotLan08b}, this causes a loss in the  coherence gain  which is  expected to be achieved by DF relaying. However, DF based rate can be optimal for certain cases as shown recently in \cite{ZaidKotLan08b, ZaidShaPiaVan10}. On the other hand, in  CF strategy  independent codebooks are exploited at the source and relay,  and the relay simply  compresses its received signal. Also, CF outperforms DF  for relaying the message intended for the destination, when the link between the source and relay is worse than the direct link.  

Here, for the non-causal situation, we derive an inner bound on the capacity region (achievable rate region) of SD-RCPM based on using Gel'fand-Pinsker (GP) type coding \cite{GelfPin80} at the nodes informed of CSI, and using CF scheme at the relay. We derive an achievable rate region for SD-RCPM in the causal case using Shannon's strategy \cite{Shan58} and CF scheme. We show that our result for the causal case can be considered as a special case of non-causal CSI, and this is in analogy with the relation between the capacity of the single user channel with causal CSI \cite{Shan58}, and its non-causal counterpart \cite{GelfPin80}. We also show that our results subsume the results in \cite{AkhbMirAre09, MirmAkhAre09} which are for SD-RC, and the result in \cite{TannNos07} which is for RCPM (state-independent), as special cases. As an example, we consider the Gaussian version of SD-RCPM with additive independent identically distributed (i.i.d) state process, and  obtain achievable rate region for the case where perfect CSI is available non-causally only at the source. We also illustrate  the trade-off between relayed message rate and private messages rates for this channel. Since there is no available  achievable region for SD-RCPM which is established based on DF scheme, to compare DF and CF strategies for this channel,  we consider a scenario in which the relay sends no private message to the destination (i.e., SD-RCPM reduces to SD-PCRBC). For this scenario, we  make a comparison between our CF based derived achievable region and the  rate region  derived in \cite{ZaidVan07} based  on DF.

The rest of paper is organized as follows. Section \ref{sec:definition}, introduces SD-RCPM  channel model and  notations. Section \ref{sec:noncausal}, investigates SD-RCPM with non-causal CSI. The causal case is considered in section \ref{sec:causal} and finally, Section \ref{sec:Gaussian} contains Gaussian examples. 
\vspace{-2mm}
\section{Preliminaries and Definitions}\label{sec:definition}
In this paper, upper case letters (e.g., $X$) are used to denote Random Variables (RVs) while their realizations are denoted by lower case letters (e.g., $x$). $X_i^j$ indicates the sequence of RVs $(X_i,X_{i+1},\ldots, X_j)$, and for brevity, $X^j$ is used instead of $X_1^j$. $p_X(x)$ denotes the probability mass function (p.m.f) of $X$ on a set $\Xc$, where occasionally subscript $X$ is omitted. $A_\epsilon^n(X,Y)$ denotes the set of strongly jointly $\epsilon$-typical length-$n$ sequences  on $p(x,y)$, which is abbreviated by $A_\epsilon^n$  if it is clear from the context. 

A discrete memoryless SD-RCPM is denoted by $(\Xc_1 \times \Xc_2,p(y_2,y_3|x_1,x_2,s),\Yc_2 \times \Yc_3)$, where $p(y_2,y_3|x_1,x_2,s)$ is the    probability transition function. $X_1\in \Xc_1$ and $X_2\in \Xc_2$ are the source and  relay inputs, respectively. $Y_2\in \Yc_2$ and $Y_3\in \Yc_3$ are respectively the outputs at the relay and  destination, and $s$ denotes the channel state. We assume that the source and relay know an i.i.d noisy version of states (i.e., partial CSI) drawn by $p(s,s_1,s_2)$ where $s\in \Sc$, $s_1\in \Sc_1$ and  $s_2\in \Sc_2$. The CSI at the source (or the relay) is perfect if $s_{1,j}$ (or $s_{2,j}$) equals $s_j$ for  
$1\leq j \leq n$.

A $((2^{nR_{12}},2^{nR_{23}},2^{nR_{13}}),n)$ code for SD-RCPM consists of three message sets $\Wc_{12}=\{1,\ldots,2^{nR_{12}}\}$ (source to relay private message), $\Wc_{23}=\{1,\ldots,2^{nR_{23}}\}$ (relay to destination private message), and $\Wc_{13}=\{1,\ldots,2^{nR_{13}}\}$ (sent from the source to the destination  with the help of the relay),
where  independent messages $W_{12}$, $W_{23}$ and $W_{13}$  are uniformly distributed over respective sets.  It also consists of  an encoder at the source and a set of encoding functions at the relay where for non-causal CSI are defined as $\phi_{1}:\Wc_{12}\times\Wc_{13}\times\Sc_1^{n}\rightarrow\Xc_{1}^{n}$, and   $\phi_{2,j}:\Yc_{2}^{j-1}\times\Wc_{23}\times\Sc_2^{n}\rightarrow\Xc_{2}$ for $j=1,\ldots,n$, respectively. For causal CSI these encoders are respectively defined by 
$\phi_{1,j}:\Wc_{12}\times\Wc_{13}\times\Sc_1^{j}\rightarrow\Xc_{1}$ and $\phi_{2,j}:\Yc_{2}^{j-1}\times\Wc_{23}\times\Sc_2^{j}\rightarrow\Xc_{2}$, for $j=1,\ldots,n$. Two decoding functions $d_1$ and $d_2$ at the relay and destination are defined  respectively as: $d_1:\Yc_2^n \times\Sc_2^{n}\rightarrow \Wc_{12}$ and
$d_2:\Yc_3^n \rightarrow \Wc_{23} \times\Wc_{13}$. Note that, for decoding at the relay there is no difference between causal and non-causal cases since the relay can wait until the end of the block, before decoding. The average probability of error ($P_e^{(n)}$) is defined as the one for RCPM in \cite{TannNos07}. A rate tuple $(R_{12},R_{23},R_{13})$ is said to be achievable for SD-RCPM, if there exists a sequence of codes $((2^{nR_{12}},2^{nR_{23}},2^{nR_{13}}),n)$ with  $P_e^{(n)}\rightarrow 0$ as $n\rightarrow \infty$. 
\section{SD-RCPM with Non-Causal CSI}\label{sec:noncausal}
In this section, we consider SD-RCPM with partial non-causal CSI  at the source and the relay. In the following, the achievable rate region  of this channel is derived.
\begin{theorem}\label{th:noncausal}
In the discrete memoryless SD-RCPM with non-causal CSI $S_1$ and $S_2$ respectively at the source and the relay,  the nonnegative rate tuples  {\small $(R_{12},R_{23},R_{13})$} denoted as $\Rc_1$, satisfying:
\begin{align}
&\!\!R_{13}< I(T_1;\hat Y_2,Y_3|K_2,Q_2)-I(T_1;S_1)\label{eq:lowerCFsourcerelaync1}\\
&\!\!R_{12}< I(T_2;Y_2,S_2|K_2,Q_2)-I(T_2;S_1)\label{eq:lowerCFsourcerelaync2}\\
&\!\!R_{13}+R_{12}< I(T_1;\hat Y_2,Y_3|K_2,Q_2)+I(T_2;Y_2,S_2|K_2,Q_2)\nonumber\\
&\ \ \ \ \ \ \ \ \ \ \ \ \ \ \ \ \ \ \ \ \ \ -I(T_1;S_1)-I(T_2;S_1)-I(T_1;T_2|S_1)\label{eq:lowerCFsourcerelaync3}\\
&\!\!R_{23}<I(K_2;Y_3)-I(K_2;S_2)\label{eq:lowerCFsourcerelaync4},
\end{align}
subject to the constraint
{\small
\begin{flalign}
\!\!\!\! I(\hat Y_2;Y_2,S_2,T_2|K_2,Q_2,Y_3) \leq I(Q_2;Y_3|K_2)-I(Q_2;S_2|K_2),\label{eq:condCFsourcerelaync}
\end{flalign}
}are achievable for any joint p.m.f of the form
\begin{flalign}
&\!\!p(s,s_1,s_2,k_2,q_2,t_1,t_2,x_1,x_2,\hat y_2,y_2,y_3)=\nonumber \\
&\!\!p(s,s_1,s_2)p(k_2|s_2)p(q_2|k_2,s_2)p(x_2|q_2,k_2,s_2)p(t_1,t_2|s_1)\nonumber\\
&\!\!\!\!\times p(x_1|t_1,t_2,s_1)p(y_2,y_3|x_1,x_2,s)p(\hat y_2|y_2,q_2,k_2,s_2,t_2).\label{eq:pmfsourcerelaync}
\end{flalign}
\end{theorem}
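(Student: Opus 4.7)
The plan is to combine Gel'fand--Pinsker (GP) coding at the two terminals that observe CSI with a block-Markov Compress-and-Forward (CF) scheme at the relay. Transmission proceeds over $B$ blocks; in each block the source sends fresh $(w_{12},w_{13})$ and the relay sends a fresh $w_{23}$ together with a Wyner--Ziv-type description of its previous-block output $Y_2^n$. The auxiliaries $T_1,T_2$ carry $W_{13},W_{12}$ respectively and are bin-indexed against $S_1^n$; the auxiliary $K_2$ carries $W_{23}$ and is bin-indexed against $S_2^n$; the auxiliary $Q_2$ carries the Wyner--Ziv bin index of $\hat Y_2$, nested inside each $K_2$-cloud.

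For codebook generation I would, at the source, create $2^{nR_{13}}$ $T_1$-bins of size $2^{nR_{13}'}$ and $2^{nR_{12}}$ $T_2$-bins of size $2^{nR_{12}'}$; at the relay, $2^{nR_{23}}$ $K_2$-bins of size $2^{nR_{23}'}$, inside each $K_2$ a set of $2^{n\tilde R}$ $Q_2$-codewords, and inside each $Q_2$ a Wyner--Ziv codebook of $\hat Y_2$ sequences grouped into $2^{n\tilde R}$ bins. Given $s_1^n$, the source picks by joint typicality a pair $(T_1^n,T_2^n)$ from the specified bins (multivariate covering requires the bin sizes to satisfy $R_{13}'\ge I(T_1;S_1)$, $R_{12}'\ge I(T_2;S_1)$, and the sum bound $R_{13}'+R_{12}'\ge I(T_1;S_1)+I(T_2;S_1)+I(T_1;T_2|S_1)$), then generates $X_1^n$; given $s_2^n$, the relay similarly picks $K_2^n$ ($R_{23}'\ge I(K_2;S_2)$), picks a $Q_2^n$ that, together with $s_2^n$ and $k_2^n$, matches a $\hat Y_2^n$ jointly typical with the preceding block's $Y_2^n$, and emits $X_2^n$.

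Decoding uses backward decoding at the destination and forward decoding at the relay. The relay, knowing its own $(K_2^n,Q_2^n)$ and observing $(Y_2^n,S_2^n)$, looks up $T_2^n$ by joint typicality; the standard GP error analysis yields \eqref{eq:lowerCFsourcerelaync2}. The destination, processing block $b+1$ before $b$, first decodes $w_{23}$ from $Y_3^n$ via $K_2^n$, giving \eqref{eq:lowerCFsourcerelaync4}; then, given $K_2^n$, it jointly decodes $Q_2^n$ (the CF bin index) and $T_1^n$ (for $w_{13}$) using $Y_3^n$ together with the recovered $\hat Y_2^n$ from block $b$. Reliable recovery of the Wyner--Ziv index produces the constraint \eqref{eq:condCFsourcerelaync}, while the GP decoding of $T_1^n$ against $(\hat Y_2,Y_3)$ yields \eqref{eq:lowerCFsourcerelaync1}; the joint-decoding union bound combined with the source-side covering constraint $R_{13}'+R_{12}'\ge I(T_1;S_1)+I(T_2;S_1)+I(T_1;T_2|S_1)$ produces the sum-rate bound \eqref{eq:lowerCFsourcerelaync3}. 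Letting $B\to\infty$ and $n\to\infty$ drives the per-block overhead to zero.

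The main obstacle is bookkeeping the \emph{double} GP layer at the source: because $T_1$ and $T_2$ are both chosen against the same state $S_1^n$ but carry independent messages, the covering lemma must be invoked in multivariate form, and the $-I(T_1;T_2|S_1)$ penalty must be threaded through the decoding error events so that it appears only in the sum-rate inequality \eqref{eq:lowerCFsourcerelaync3} and not in the individual rate bounds. A secondary subtlety is that the Wyner--Ziv compression $\hat Y_2$ is generated conditionally on $T_2$ in the joint p.m.f.\ \eqref{eq:pmfsourcerelaync}; this correlation, which the relay exploits only implicitly through $(Y_2,S_2)$, shows up on the left-hand side of \eqref{eq:condCFsourcerelaync} as $I(\hat Y_2;Y_2,S_2,T_2|K_2,Q_2,Y_3)$ and must be justified by conditioning the typicality arguments on the $T_2$-codeword that the relay successfully recovers before performing its CF compression.
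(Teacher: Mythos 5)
Your proposal follows essentially the same route as the paper: Marton/Gel'fand--Pinsker coding with mutual covering of $(T_1,T_2)$ against $S_1$ at the source, GP coding of $K_2$ against $S_2$ at the relay, block-Markov compress-and-forward with Wyner--Ziv binning, and decoding of $w_{23}$ and the compression index from the current block followed by decoding of the previous block's $w_{13}$ from $(Y_3^n,\hat Y_2^n)$. The individual constraints you list and the way you combine them agree with the paper's.

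One concrete gap: as written, your relay codebook has exactly one $Q_2$-codeword per Wyner--Ziv bin ($2^{n\tilde R}$ of each), so the relay has no freedom to select a $Q_2^n$ jointly typical with $(k_2^n,s_2^n)$ --- yet the target p.m.f.\ \eqref{eq:pmfsourcerelaync} contains the factor $p(q_2|k_2,s_2)$ and generates $X_2$ from $p(x_2|q_2,k_2,s_2)$, which requires such typicality. The paper double-indexes $Q_2$ as $q_2^n(t,r|w_{23},m)$ with $2^{nR_2}$ bins each of size $2^{nR_2'}$ and $R_2'>I(Q_2;S_2|K_2)$; it is precisely this extra covering cost, combined with the destination's constraint $R_2+R_2'<I(Q_2;Y_3|K_2)$, the compression covering $\hat R_2>I(\hat Y_2;Y_2,S_2,T_2|K_2,Q_2)$, and the list-decoding condition $\hat R_2<R_2+I(\hat Y_2;Y_3|K_2,Q_2)$, that yields the $-I(Q_2;S_2|K_2)$ term on the right-hand side of \eqref{eq:condCFsourcerelaync}. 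You assert that constraint but your construction does not account for where that term comes from; adding the second (GP) index on $Q_2$ closes the gap. A smaller point: what you call backward decoding at the destination is in the paper ordinary one-block-delayed sliding-window CF decoding, and your step-by-step description in fact matches the latter, not backward decoding.
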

\begin{remark}\label{mk:CFsourcerelaync1}
The relay which is the middle terminal, knows CSI $S_2$. Hence, the relay  acting as a decoder of the source-relay link and based on its knowledge of CSI, tries to cancel the effect of the channel state on $y_2$. Moreover, it can compress $S_2$ (besides $Y_2$) and sends it to  destination to provide it with a partial CSI, when needed. For example, if {\small $Y_2=\emptyset$} the relay can  compress only $S_2$  and sends it to the destination, and so the destination can use this partial CSI. To achieve these two goals, in general, we assume that in  \eqref{eq:pmfsourcerelaync} $\hat y_2$ is conditioned on $s_2$. 
Moreover,  $T_2$ which is decoded at the relay is not decoded at the destination ($T_2$ carries source to relay private message), and hence can be regraded as a channel state for destination. So, {\small $T_2$} can be treated similar to {\small $S_2$}, and we assume that in \eqref{eq:pmfsourcerelaync} $\hat y_2$ is also conditioned on $t_2$. Note that, by these assumptions, the case where the relay only compresses {\small $Y_2$} is also included.
\end{remark}

\textit{Outline of the Proof:}
The proof is based on random coding scheme which combines GP-type coding at the source and relay, and CF strategy. More precisely, 
since the source is informed of CSI $S_1$,  the coding scheme  which extends Marton's region to the state-dependent Broadcast Channel (BC) with non-causal CSI at Transmitter (CSIT) \cite{SteiSha05} is used at the source.  The relay decoder uses $S_2$ as part of channel
output. Moreover, the relay encoder uses CF and since it is informed of $S_2$, it
uses GP coding to send the index of the compressed signal which is superimposed on the relay-destination private message.
The auxiliary RV $K_2$ stands for private message sent from the relay to destination using GP coding,
and $Q_2$ represents the bin of compressed signal's index. 
Now, consider a  block Markov encoding scheme where a sequence of  $B-1$ messages  ({\small $W_{12,i},W_{23,i},W_{13,i}$}) for $i=1,\ldots,B-1$ is transmitted in $B$ blocks, each of $n$ symbols.  As {\small $B\rightarrow \infty$}, the  rate tuple {\small $(R_{12},R_{23},R_{13})\times\frac{(B-1)}{B}$} approaches {\small $(R_{12},R_{23},R_{13})$}. 

\textit{Random Coding:}
For any joint p.m.f defined in (\ref{eq:pmfsourcerelaync}), generate  $2^{n(\rys+\rys')}$ i.i.d codewords $t_1^n(\wys,k)$   where $\wys\in[1,2^{n\rys}]$ and $k\in[1,2^{n\rys'}]$. Generate $2^{n(\ryd+\ryd')}$ i.i.d $t_2^n(\wyd,l)$, $\wyd\in[1,2^{n\ryd}]$, $l\in[1,2^{n\ryd'}]$. Generate $2^{n(\rds+\rds')}$ i.i.d codewords $k_2^n(\wds,m)$, $\wds\in[1,2^{n\rds}]$, $m\in[1,2^{n\rds'}]$. 
For each $k_2^n(\wds,m)$, generate $2^{n(R_2+R'_2)}$ i.i.d $q_2^n(t,r|\wds,m)$, $t\in[1,2^{nR_2}]$, $r\in[1,2^{nR_2'}]$.
In codewords $t_1^n$, $t_2^n$, $k_2^n$ and $q_2^n$, the first index represents the bin, and the second one indexes  the sequence within the particular bin. For each $k_2^n(\wds,m)$ and $q_2^n(t,r|\wds,m)$, generate $2^{n\hat R_2}$ i.i.d $\hat y_2^n(z|t,r,\wds,m)$ each with probability
$\prod_{j=1}^n p(\hat y_{2j}|q_{2j},k_{2j})$. Randomly partition the set $\{1,\ldots,2^{n\hat R_2}\}$ into $2^{nR_2}$ bins defined as $B(t)$.

\textit{Encoding (at the beginning of block $i$):}
We assume that the CSI $S_1^n$ and $S_2^n$
in each block are non-causally known to the source and the relay, respectively. Then:

1) Let $(w_{13,i},w_{12,i})$ be the new message pair  to be sent  from the source in block $i$. The source looks for the smallest $k\in[1,2^{n\rys'}]$ and $l\in[1,2^{n\ryd'}]$, such that
$(t_1^n(w_{13,i},k),t_2^n(w_{12,i},l),s_1^n(i))\in A_\epsilon^n$.
Denote this $k$ and $l$ with $k_i$ and $l_i$, respectively. If no such indices $k$ and $l$ exist, an encoding error is declared.  There exist such indices $k_i$ and $l_i$ with arbitrarily high probability, if $n$ is large enough and
 \begin{align}
    \rys'>& I(T_1;S_1)\label{eq:covsourcenc1}\\
    \ryd'>& I(T_2;S_1)\label{eq:covsourcenc2}\\
    \rys'+\ryd'>& I(T_1;S_1)+I(T_2;S_1)+I(T_1;T_2|S_1)\label{eq:covsourcenc3}
\end{align}

Note that  \eqref{eq:covsourcenc1}-\eqref{eq:covsourcenc3} can be showed using techniques similar to that in  \cite{ElgaVan81}-\cite{ElgaKim10} (mutual covering lemma). 

Then, the source 
transmits i.i.d $x_1^n(w_{13,i},w_{12,i})$ drawn according to $p(x_1|t_1,t_2,s_1)$.

2) Knowing $s_2^n(i)$, the relay searches for the smallest $m\in[1,2^{nR_{23}'}]$ such that
$(k_2^n(w_{23,i},m),s_2^n(i))\in A_\epsilon^n$. Denote this $m$ as $m_i$. If no such index $m$  exists, an encoding error is declared. Based on covering lemma, for sufficiently large $n$ such an index $m_i$ can be found with arbitrarily high probability, if
\begin{equation}\label{eq:GPCFrelaync1}
R'_{23}\geq I(K_2;S_2).
\end{equation} 

3) At the relay, assume 
$(\hat y_2^n(z_{i-1}|t_{i-1},r_{i-1},w_{23,i-1},m_{i-1}),\\k_2^n(w_{23,i-1},m_{i-1}),q_2^n(t_{i-1},r_{i-1}|w_{23,i-1},m_{i-1}),
y_2^n(i-1),\\t_2^n(w_{12,i-1},l_{i-1}),s_2^n(i-1))\in A_\epsilon^n$, 
and $z_{i-1} \in B(t_i)$. Knowing
$t_i$, $s_2^n(i)$, $w_{23,i}$ and $m_i$, the relay looks for the smallest $r\in[1,2^{nR'_2}]$ denoted as $r_i$ such that $(q_2^n(t_i,r_i|w_{23,i},m_i),k_2^n(w_{23,i},m_i),s_2^n(i))\in A_\epsilon^n$.
 For sufficiently large $n$  there exists such an index $r_i$ with arbitrarily high probability, if
\begin{equation}\label{eq:GPCFrelaync2}
R_{2}'>I(Q_2;S_2|K_2).
\end{equation}

Then the relay  transmits i.i.d $x_2^n(t_i|w_{23,i})$  drawn according to $p(x_2|q_2,k_2,s_2)$.

\textit{Decoding (at the end of block $i$):} The destination  at the end of block $i$  decodes $w_{13,i-1}$, $w_{23,i}$, and the relay  decodes
$w_{12,i}$.

1) Knowing $t_i$ from the previous block, the relay seeks  a unique pair $(\hat w_{12,i}, \hat l_i)$ such that $(t_2^n(\hat w_{12,i},\hat l_i),  y_2^n(i),q_2^n(t_i,r_i|w_{23,i},m_i),k_2^n(w_{23,i},m_{i}),s_2^n(i))$ $\in A_\epsilon^n$. For sufficiently large $n$, $(\hat w_{12,i}, \hat l_i)=(w_{12,i},l_i)$ with arbitrarily small probability of error if
\begin{equation}\label{eq:sourcerelayCFsourcenc}
\ryd+\ryd'<I(T_2;Y_2,S_2|K_2,Q_2).
\end{equation}

Note that we have performed a full decoding of the message $w_{12}$  and   the index $l$, at the relay. So, the full vector $t_2^n(w_{12,i},l_i)$ has been decoded.

2) The relay finds a unique index $z$ such that: 
$(\hat y_2^n(z|t_i,r_i,w_{23,i},m_i),q_2^n(t_{i},r_{i}|w_{23,i},m_i),k_2^n(w_{23,i},m_{i}), y_2^n(i),$ $t_2^n(w_{12,i},l_i),s_2^n(i))\in A_\epsilon^n$.
There exists such an index $z$ with arbitrarily high probability, if $n$ is
sufficiently large and
\begin{equation}\label{eq:WZCFrelaync}
\hat R_2> I(\hat Y_2;Y_2,S_2,T_2|K_2,Q_2).
\end{equation}

3) At first, the destination finds a unique pair $(\hat w_{23,i}, \hat m_i)$, such that $(k_2^n(\hat w_{23,i},\hat m_{i}),y_3^n(i))\in A_\epsilon^n$.
The decoding error can be made small if
\begin{equation}\label{eq:relaydestCFrelaync}
R_{23}+R_{23}'<I(K_2;Y_3).
\end{equation}

4) Then, the destination looks for a unique pair $(\hat t_i,\hat r_i)$ such that
$(q_2^n(\hat t_i,\hat r_i|w_{23,i},m_i),k_2^n(w_{23,i},m_{i}),y_3^n(i))\in A_\epsilon^n$. 
The decoding error can be made small if
\begin{equation}\label{eq:relaydestCFrelaync2}
   R_2+R'_2 < I(Q_2;Y_3|K_2).
\end{equation}

Note that we have performed a full decoding of $k_2^n$  and $q_2^n$ at the destination in  steps 3 and 4.

5) Now, knowing  $t_{i-1},r_{i-1},w_{23,i-1}$ and $m_{i-1}$ (from the previous block), the destination calculates a set of indices $z$ denoted by the list $\Lc(y_3^n(i-1))$  such that
$(\hat y_2^n(z|t_{i-1},r_{i-1},w_{23,i-1},m_{i-1}),q_2^n(t_{i-1},r_{i-1}|w_{23,i-1},m_{i-1}),\\k_2^n(w_{23,i-1},m_{i-1}),y_3^n(i-1)) \in A_\epsilon^n$.
Then the destination declares that $\hat z_{i-1}$ has been sent in block
$i-1$, if $\hat z_{i-1} \in B(t_i) \cap \Lc (y_3^n(i-1))$.
With arbitrarily high probability $\hat z_{i-1}=z_{i-1}$,  if
$n$ is sufficiently large and
\begin{equation}\label{eq:listcodeCFrelaync}
    \hat R_2 < R_2 + I(\hat Y_2;Y_3|Q_2,K_2).
\end{equation}

6) Finally, the destination uses  $y_3^n(i-1)$ and $\hat y_2^n(z_{i-1}|t_{i-1},r_{i-1},w_{23,i-1},m_{i-1})$,
 and declares that $\hat w_{13,i-1}$ is sent, if there is a unique $\hat w_{13,i-1}$ for some $k_{i-1} \in[1,2^{n\rys'}]$  such that
$(t_1^n(\hat w_{13,i-1},k_{i-1}),q_2^n(t_{i-1},r_{i-1}|w_{23,i-1},m_{i-1}),y_3^n(i-1),k_2^n(w_{23,i-1},m_{i-1}),\hat y_2^n(z_{i-1}|t_{i-1},r_{i-1},w_{23,i-1},m_{i-1}))\in A_\epsilon^n$.
Thus,  $\hat w_{13,i-1}=w_{13,i-1}$ with arbitrarily high probability, if $n$ is sufficiently large and
\begin{equation}\label{eq:destCFsourcenc}
    R_{13}+R_{13}'<I(T_1;Y_3,\hat Y_2|Q_2,K_2).
\end{equation}

Combining \eqref{eq:covsourcenc1},  \eqref{eq:covsourcenc2}, \eqref{eq:covsourcenc3} with \eqref{eq:sourcerelayCFsourcenc}, \eqref{eq:destCFsourcenc}, and  \eqref{eq:GPCFrelaync1} with \eqref{eq:relaydestCFrelaync} yields \eqref{eq:lowerCFsourcerelaync1}-\eqref{eq:lowerCFsourcerelaync4}.  The constraint in \eqref{eq:condCFsourcerelaync} follows from combining \eqref{eq:GPCFrelaync2}, \eqref{eq:WZCFrelaync}, \eqref{eq:relaydestCFrelaync2}  and \eqref{eq:listcodeCFrelaync}. Moreover, note that since $R_{13},R_{12},R_{23}>0$ and due to \eqref{eq:condCFsourcerelaync}, full decoding of $t_2^n,k_2^n$  and $q_2^n$ does not cause additional constraints. This completes the proof. \IEEEQED

\begin{remark}\label{mk:CFsourcerelaync2}
By setting $S=S_1=S_2=\emptyset$ (to make state-independent channel),  and re-defining $T_1=U_1$, $T_2=U_2$, $K_2=V$ and $Q_2=X_2$  in  the region $\Rc_1$, it yields the achievable region for RCPM in \cite[Theorem 2]{TannNos07}. After these substitutions and simplifications, the only difference is in  \eqref{eq:condCFsourcerelaync}, with its counterpart in \cite[Equation 21]{TannNos07}, where the reason is as follows: since the destination does not decode $U_2$ ($T_2$ is re-defined as $U_2$), in the random coding process $\hat Y_2$ is not generated for each $U_2$. Therefore, due to covering lemma \cite{ElgaKim10}, $\hat Y_2$ should cover both $Y_2$ and $U_2$, and  $\hat R_2>I(\hat Y_2;Y_2,U_2|V,X_2)$ is obtained.  But in \cite{TannNos07} the authors did not notice this point,  and to obtain $\hat R_2$ they erroneously made conditioning on $U_2$ and so they obtained $\hat R_2>I(\hat Y_2;Y_2|U_2,V,X_2)$. 
\end{remark}
\begin{remark}\label{mk:CFsourcerelaync3}
By setting $(S_2,Q_2,K_2,X_2,\hat Y_2)=\emptyset$ in   region $\Rc_1$ (i.e., disable relaying), $\Rc_1$ reduces  to  the achievable rate region  for BC with non-causal CSIT that has been derived in \cite{SteiSha05}. 
\end{remark} 
\begin{remark}\label{mk:CFsourcerelaync4}
The region $\Rc_1$, reduces to the  achievable rate for SD-RC  with non-causal perfect CSI only at the source in \cite[Theorem 1]{AkhbMirAre09} by setting $K_2=\emptyset$ (no private $W_{23}$ from the relay to  destination and $R_{23}=0$), $T_2=\emptyset$ (no private $W_{12}$ from the source to the relay and $R_{12}=0$), $S_2=\emptyset$ and $S_1=S$  (only the source is informed of perfect CSI) and re-defining $Q_2=X_2$ in $\Rc_1$. Moreover, setting $S_1=\emptyset$, $S_2=S$, $T_2=\emptyset$,   $K_2=\emptyset$,  and re-defining {\small $T_1=X_1$} in $\Rc_1$, results in the achievable rate for SD-RC  with non-causal perfect CSI only at the relay in \cite[Theorem 2]{AkhbMirAre09}.
\end{remark}

Now, we specialize Theorem \ref{th:noncausal} to  cases where perfect CSI is available non-causally only at the source, only at the relay, or both at the source and relay.
\begin{corol}\label{corol:noncausal}
Theorem \ref{th:noncausal} is specialized  to an  achievable rate region for SD-RCPM  with non-causal perfect CSI only at the source, by setting   $S_1=S$, $S_2=\emptyset$ (since only the source is informed of CSI) and re-defining  $K_2=V$, $Q_2=X_2$ in $\Rc_1$.  Furthermore, setting $S_1=\emptyset$,  $S_2=S$ and re-defining $T_1=U_1$, $T_2=U_2$   in $\Rc_1$,  yields an achievable rate region for SD-RCPM  with non-causal perfect CSI only at the relay. Setting $S_1 = S_2 = S$ in $\Rc_1$ yields an achievable rate region for SD-RCPM  with non-causal perfect CSI at both the source and relay.\end{corol}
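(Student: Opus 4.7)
The plan is to treat the corollary as three direct specializations of Theorem~\ref{th:noncausal}. In each of the three cases I will pick specific auxiliary random variables and specific marginals for $S_1, S_2$, verify that the resulting joint distribution still factorizes in the form \eqref{eq:pmfsourcerelaync}, and then substitute into the rate inequalities \eqref{eq:lowerCFsourcerelaync1}--\eqref{eq:condCFsourcerelaync} to read off the claimed region. No new coding argument is needed; the achievability follows from Theorem~\ref{th:noncausal} applied to a restricted class of admissible input distributions.

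For the first case, perfect non-causal CSI only at the source, setting $S_2 = \emptyset$ means the relay has no state information to exploit, so the GP binning at the relay collapses: the binning cost $I(Q_2;S_2|K_2)$ vanishes, the auxiliary $Q_2$ can be merged with the relay input $X_2$, and $K_2$ plays the role of a standard superposition auxiliary which we rename $V$. With $S_1 = S$, the factorization in \eqref{eq:pmfsourcerelaync} reduces to $p(s)p(v)p(x_2|v)p(t_1,t_2|s)p(x_1|t_1,t_2,s)p(y_2,y_3|x_1,x_2,s)p(\hat y_2|y_2,v,x_2,t_2)$, which is still a valid joint p.m.f., and substitution into the rate constraints produces the desired region. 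The second case, perfect CSI only at the relay, is dual: $S_1 = \emptyset$ nullifies the GP covering costs $I(T_1;S_1)$, $I(T_2;S_1)$, $I(T_1;T_2|S_1)$ at the source, so the Marton-type covering at the source reduces to ordinary superposition-style coding and the auxiliaries $T_1, T_2$ may be relabeled $U_1, U_2$. The symmetric case $S_1 = S_2 = S$ is pure substitution, with no auxiliary relabeling required.

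The only step that calls for any real verification is checking, in each case, that setting a state component to $\emptyset$ preserves the conditional-independence structure imposed by \eqref{eq:pmfsourcerelaync} --- in particular that the degenerate mutual-information terms really disappear without leaving residual constraints, and that the identifications $(K_2, Q_2) = (V, X_2)$ in case (i) and $(T_1, T_2) = (U_1, U_2)$ in case (ii) are consistent with the remaining factors $p(x_2|q_2,k_2,s_2)$ and $p(x_1|t_1,t_2,s_1)$. Since each of these checks amounts to a one-line inspection of the p.m.f.\ factorization, I expect no real obstacle in the argument; the corollary is essentially a bookkeeping exercise built on top of Theorem~\ref{th:noncausal}.
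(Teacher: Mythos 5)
Your proposal is correct and matches the paper's treatment: the paper gives no separate proof of this corollary, presenting it precisely as the direct substitution of $(S_1,S_2)$ and the relabelings $(K_2,Q_2)\to(V,X_2)$, $(T_1,T_2)\to(U_1,U_2)$ into the region $\Rc_1$ of Theorem~\ref{th:noncausal}, which is exactly the bookkeeping you carry out. Your added checks that the degenerate mutual-information terms vanish and that the factorization \eqref{eq:pmfsourcerelaync} is preserved are the right (implicit) content of the paper's argument.
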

\vspace{-2mm}
\section{SD-RCPM with Causal CSI}\label{sec:causal}
In many practical applications, the state sequence is not known in advance, and has to be known in a causal manner. In this section, we consider SD-RCPM with partial causal CSI  at the source and the relay. In the following, the achievable rate region  for this channel is derived.
\begin{theorem}\label{th:causal}
In the discrete memoryless SD-RCPM with causal CSI $S_1$ and $S_2$ respectively  at the source and  the relay, the nonnegative rate tuples  $(R_{12},R_{23},R_{13})$ denoted as $\Rc_2$, satisfying:
\begin{align}
R_{13}<&I(T_1;\hat Y_2,Y_3|K_2,Q_2)\label{eq:lowerCFsourcerelayc1}\\
R_{12}< &I(T_2;Y_2,S_2|K_2,Q_2)\label{eq:lowerCFsourcerelayc2}\\
R_{13}+R_{12}< &I(T_1;\hat Y_2,Y_3|K_2,Q_2)\nonumber\\
&+I(T_2;Y_2,S_2|K_2,Q_2)-I(T_1;T_2)\label{eq:lowerCFsourcerelayc3}\\
R_{23}<&I(K_2;Y_3)\label{eq:lowerCFsourcerelayc4},\\
\!\!\!\!\!\textrm{subject to}\ \ I(\hat Y_2&;Y_2,S_2,T_2|K_2,Q_2,Y_3) \leq I(Q_2;Y_3|K_2),\label{eq:condCFsourcerelayc}
\end{align} are achievable for any joint p.m.f of the form
\begin{align}
&p(s,s_1,s_2)p(k_2)p(q_2|k_2)p(x_2|q_2,k_2,s_2)p(t_1,t_2)p(x_1|t_1,t_2,s_1)\nonumber\\
&\ \times p(y_2,y_3|x_1,x_2,s)p(\hat y_2|y_2,q_2,k_2,s_2,t_2),\label{eq:pmfsourcerelayc}
\end{align} where $X_1=f_1(T_1,T_2,S_1)$, $X_2=f_2(Q_2,K_2,S_2)$ and $f_1(\cdot)$ and $f_2(\cdot)$ are two arbitrary deterministic functions.
\end{theorem}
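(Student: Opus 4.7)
The plan is to derive Theorem~\ref{th:causal} as a direct specialization of Theorem~\ref{th:noncausal}, exploiting the same classical relationship that makes Shannon's strategy a restriction of Gel'fand--Pinsker coding in the single-user setting. Concretely, I would restrict the joint distribution in~\eqref{eq:pmfsourcerelaync} by (i) taking the source auxiliaries $(T_1,T_2)$ to be independent of $S_1$, so that $p(t_1,t_2|s_1)=p(t_1,t_2)$; (ii) taking the relay auxiliaries $(K_2,Q_2)$ to be independent of $S_2$, so that $p(k_2|s_2)=p(k_2)$ and $p(q_2|k_2,s_2)=p(q_2|k_2)$; and (iii) demanding that $X_1=f_1(T_1,T_2,S_1)$ and $X_2=f_2(Q_2,K_2,S_2)$ be deterministic functions of the auxiliaries and the currently observed state. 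Step (iii) is precisely Shannon's strategy, and makes the encoders implementable with only causal access to the partial CSI, since each transmitted symbol depends on the state only at the current time index.

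Having fixed this subclass of distributions, I would evaluate $\Rc_1$ on it and show that it reduces to $\Rc_2$. Because $(T_1,T_2)\perp S_1$, the penalty terms $I(T_1;S_1)$ and $I(T_2;S_1)$ vanish, and $I(T_1;T_2|S_1)$ collapses to $I(T_1;T_2)$; similarly, $I(K_2;S_2)=0$ and $I(Q_2;S_2|K_2)=0$ by the relay-side independence. Substituting these simplifications into \eqref{eq:lowerCFsourcerelaync1}--\eqref{eq:lowerCFsourcerelaync4} and \eqref{eq:condCFsourcerelaync} immediately yields the bounds \eqref{eq:lowerCFsourcerelayc1}--\eqref{eq:lowerCFsourcerelayc4} together with the constraint \eqref{eq:condCFsourcerelayc}, matching the claimed region and its factorization \eqref{eq:pmfsourcerelayc}.

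The remaining task is to check that the coding scheme underlying Theorem~\ref{th:noncausal} degenerates into a legitimately causal scheme. Since the auxiliaries are now state-independent, the state-covering steps \eqref{eq:covsourcenc1}--\eqref{eq:covsourcenc3} and \eqref{eq:GPCFrelaync1}--\eqref{eq:GPCFrelaync2} are vacuous: the source and relay pick codewords $t_1^n$, $t_2^n$, $k_2^n$, and $q_2^n$ directly from the messages (and, for $q_2^n$, the compression index), without any state-dependent binning. The only place at which CSI enters the transmitted sequences is through the symbol-wise maps $x_{1,j}=f_1(t_{1,j},t_{2,j},s_{1,j})$ and $x_{2,j}=f_2(q_{2,j},k_{2,j},s_{2,j})$, which use $s_{1,j}$ and $s_{2,j}$ only at the current time. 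Decoding at the relay and destination, together with the CF compression, list-decoding, and bin-decoding steps, is unchanged from the proof of Theorem~\ref{th:noncausal} and yields the same error-probability bounds.

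The main obstacle I anticipate is purely bookkeeping: one must verify that the collapse $I(T_1;T_2|S_1)\to I(T_1;T_2)$ (and the associated disappearance of the mutual-covering step) is carried out cleanly, and confirm that restricting $X_1$, $X_2$ to be deterministic functions of $(T_1,T_2,S_1)$ and $(Q_2,K_2,S_2)$ respectively does not shrink the achievable region -- it does not, since any stochastic conditional $p(x_i|\cdot,s_i)$ can be absorbed into an enlarged auxiliary variable by standard functional-representation arguments. Once these points are in place, no genuinely new probabilistic argument is required and Theorem~\ref{th:causal} follows.
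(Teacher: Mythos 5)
Your proposal is correct and takes essentially the same route as the paper, which itself only sketches this proof: specialize the non-causal scheme by making $(T_1,T_2)$ independent of $S_1$ and $(K_2,Q_2)$ independent of $S_2$, realize the state dependence solely through the symbol-wise Shannon-strategy maps $f_1,f_2$, and observe that the Gel'fand--Pinsker penalty terms vanish. One small imprecision: the covering constraints \eqref{eq:covsourcenc1}--\eqref{eq:covsourcenc3} do not all become vacuous --- only the state-covering part disappears, while \eqref{eq:covsourcenc3} survives as the Marton mutual-covering requirement $R_{13}'+R_{12}'>I(T_1;T_2)$ needed to find a jointly typical pair $(t_1^n,t_2^n)$ when $T_1$ and $T_2$ are correlated; this residual binning is exactly what produces the $-I(T_1;T_2)$ term you correctly retain in \eqref{eq:lowerCFsourcerelayc3}, and it is what the paper refers to as ``Marton's simplified region'' binning at the source.
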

\begin{proof}
Similar to the single user channel, the proof of the causal case follows the lines of the proof  for the non-causal case in Theorem \ref{th:noncausal}. The only difference is that since in the causal case  $(T_1,T_2)$ and $S_1$ are independent (see \eqref{eq:pmfsourcerelayc}), the encoding scheme for the non-causal case is reduced to the one that does not include GP binning, and therefore does not require non-causal knowledge of $S_1$. Similar situation happens for $(K_2,Q_2)$ and $S_2$. So, the coding scheme is obtained based on using Shannon's strategy \cite{Shan58} at the source and the relay which are informed of CSI to incorporate the state knowledge,
using random binning corresponds to Marton's simplified region for BC \cite{ElgaVan81} at the source, and using CF relaying scheme at the relay. The proof is rather straightforward and  is omitted here for brevity.
\end{proof}
\begin{remark}\label{mk:CFsourcec1}
The expression of  $\Rc_2$ can be interpreted as a special case of $\Rc_1$, where $(T_1,T_2)$ are independent of $S_1$, and $(K_2,Q_2)$ are independent of $S_2$. This is similar to the relation between the expression for the capacity of state-dependent single user channel with causal CSI \cite{Shan58}, and its non-causal counterpart \cite{GelfPin80}. Hence, similar to Remark \ref{mk:CFsourcerelaync4},  $\Rc_2$ can  be reduced to  achievable rates derived for SD-RC with causal CSI in \cite{MirmAkhAre09}. Moreover, similar to  Corollary \ref{corol:noncausal},   $\Rc_2$ can be specialized to  achievable rate regions for SD-RCPM for the cases where CSI is available causally only at the source, only at the relay, or both at the source and relay.
\end{remark}
\vspace{-2mm}
\section{Gaussian Example}\label{sec:Gaussian}
In this section, we consider a general full-duplex Gaussian RC with private messages and with  additive independent Gaussian state and noise, and we refer to it as Gaussian SD-RCPM.  The outputs at the relay and the destination at time $j=1,\ldots,n$ for Gaussian SD-RCPM are given by:
{\small
\begin{equation}\label{eq:Gaussmodel2}
Y_{2j}=X_{1j}+Z_{2j}+S_j\ \  \textrm{and}\ \  Y_{3j}=X_{1j}+X_{2j}+Z_{3j}+S_j,
\end{equation}}where  $X_{1j}$ and $X_{2j}$ are transmitted signals by the source and the relay with individual average power constraints  $P_1$ and $P_2$. $Z_{2j}$ and $Z_{3j}$ are  independent zero-mean Gaussian RVs with variances   $N_2$ and $N_3$. The channel state $S_j$  is a zero-mean Gaussian RV with variance $Q$, and is  independent of $Z_{2j}$ and $Z_{3j}$. 

As an example, we consider the case where the source knows state sequence $S^n$ non-causally and perfectly, while the relay is not informed of CSI. To provide an achievable rate region for this Gaussian case, we use the results obtained  in Corollary \ref{corol:noncausal} for this scenario.  For simplicity, we assume that input distributions are Gaussian, although this assumption may not be optimal. Similar to \cite{TannNos07},  we assume that the generated codebook at the source is mapped into Gaussian RVs as $X_1=U_1+U_2$, and $\rho$ denotes the correlation coefficient between $U_1$ and $U_2$. Power constraint $P_1$ at the source, and parameter $\rho$ lead to power constraints for $U_1$ and $U_2$ (i.e., $P_{u2}=\gamma P_1, P_{u1}+P_{u2}+2\rho\sqrt{P_{u1}P_{u2}}=P_1$). Now, since the source is informed of CSI, it uses Dirty Paper Coding (DPC) \cite{Cost83} for  transmission  of message $W_{12}$ (codeword $U_2$) to relay, and uses  another  DPC for transmission  of message $W_{13}$ (codeword $U_1$) to destination. Hence, similar to Costa's initial DPC, the auxiliary RVs $T_1$ and $T_2$ are defined as $T_1=U_1+\alpha_1 S$ and $T_2=U_2+\alpha_2 S$, respectively. Moreover, to partially cancel the state,  arbitrary correlations are assumed between $U_1$ and $S$ (via $\rho_{u_1s}$), and between $U_2$ and $S$ (via $\rho_{u_2s}$) which is called Generalized DPC (GDPC) \cite{ZaidVan08b}. Without loss of generality $\rho_{u2s}$ can be set to zero (i.e., using DPC instead) since $U_2$ is defined for transmission of $W_{12}$
on the point to point link between the source and relay. The relay is not informed of CSI, so $E(X_2S)=0$ and it does not use DPC. At the relay, $\theta$ is the fraction of the relay's power which is dedicated to the relay's private message. Therefore, $X_2=V+X'_2$, where $V\sim\mathcal{N}(0,\theta P_2)$ and $X'_2\sim\mathcal{N}(0,(1-\theta)P_2)$ are independent. 
We assume $\hat Y_2=\beta Y_2+f T_2+\hat Z$ where $\beta\in[0,1]$ and $f\in[-1,1]$, and compression noise $\hat Z\sim\mathcal{N}(0,\hat N)$ is independent of $S,X_1,X_2,Z_2,Z_3$. We  let $f$ to be negative to consider the cases   where the relay peels-off the part intended for the relay (i.e., $T_2$), and then compresses the remaining part to send to the destination. $f=0$ refers to the case where  relay only compresses $Y_2$ without removing or compressing $T_2$. So, we evaluate \eqref{eq:lowerCFsourcerelaync1}-\eqref{eq:condCFsourcerelaync}  for this scenario (informed source only) and for a given ($\rho,\gamma,\alpha_1,\alpha_2,\rho_{u_1s},\rho_{u2s},\theta,\beta,f$). By varying these parameters, and taking the union of the resulted regions, the achievable rate region for this scenario is established (its expression is omitted here due to space). By ignoring private messages in this channel (i.e., setting $\gamma=0$ and $\theta=0$), this achievable region  reduces to  the achievable rate  derived in \cite[Theorem 4]{AkhbMirAre09} for Gaussian SD-RC   with non-causal CSI only at the source.

In Fig. \ref{fig:CFRCPMsource1}, our  achievable rate region for Gaussian SD-RCPM with non-causal CSI at the source, is plotted for 
$P_1=Q=N_3=10\textrm{dB}$, $P_2=15\textrm{dB}$, $N_2=0\textrm{dB}$,  parameterized by the fraction of the relay's power dedicated to relay's private message ($\theta$), and we can see the trade-off between $R_{13}$ and $R_{12}$. As illustrated in Fig.~\ref{fig:CFRCPMsource1}, when  $\theta$ is increased, the maximum achievable $R_{13}$ is decreased. For example, $\theta=0$ refers to the case where the relay sends no private message to the destination ($W_{23}=\emptyset$), and Gaussian SD-RCPM reduces to Gaussian SD-PCRBC. Therefore, for this case  we can compare our CF based  achievable  region, with the DF based  achievable region   derived in \cite[Lemma 2]{ZaidVan07} for  SD-PCRBC with informed source only. Moreover, the outer bound on the capacity region of Gaussian PC-RBC (state-independent) in \cite[Theorem 5]{LianVee07} is a trivial  outer bound on the capacity region of SD-PCRBC. So  we illustrate this outer bound, DF and CF based rate regions  in Fig. \ref{fig:Figure2} for two examples of noise configuration: 1) $N_2=N_3$ ($N_2,N_3=10\textrm{dB}, P_1=20\textrm{dB}$) which is denoted as ``power set 1'' in Fig. \ref{fig:Figure2}, 2) $N_3>N_2$ ($N_3=10\textrm{dB}, N_2=8\textrm{dB}, P_1=16\textrm{dB}$) denoted as ``power set 2''. For both ``power set 1'' and ``power set 2'', $Q=10\textrm{dB}$, $P_2=25\textrm{dB}$. For these power sets, CF outperforms DF in PC-RBC (state-independent). So,  in the state-dependent version of this channel it can outperform DF which is shown in Fig. \ref{fig:Figure2}. In these examples, if $P_2$ is also increased, CF based bound becomes closer to its respective outer bound. Note that for noise configuration where $N_3<N_2$, it is clear that CF outperforms DF, and  to  prevent  having a messy figure this case is not depicted in Fig. \ref{fig:Figure2}. We remark that  these power sets are chosen to have cases that can be presented  in one figure. It is also interesting to note that for all these cases,  maximum achievable $R_{12}$ (when $R_{13}=0$) coincides with that of derived for PC-RBC (state-independent), and this confirms that a complete state cancellation is performed for sending $U_2$ on the source-relay link for $R_{13}=0$.

\begin{figure}[tb]
  \centering
  \includegraphics[width=6.4cm]{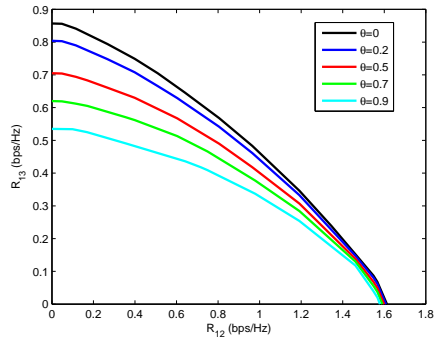}
    \caption{CF based achievable rate region for general Gaussian SD-RCPM with non-causal CSI only at the source, parametrized by $\theta$.}
    \label{fig:CFRCPMsource1}
    \vspace{-4mm}
\end{figure}

\begin{figure}[tb]
  \centering
  \includegraphics[width=6.4cm]{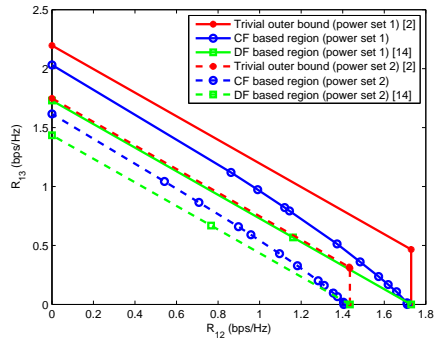}
    \caption{General Gaussian SD-PCRBC  with  non-causal CSI only at the source, for ``power set 1'' and ``power set 2''.}
    \label{fig:Figure2}
     \vspace{-5mm}
\end{figure}

\vspace{-2mm}
\section{Conclusion}\label{sec:conclusion}
We introduced SD-RCPM as a generalization of SD-RC and SD-PCRBC. In order to have a unified view, both causal and non-causal cases were investigated. In each case, an achievable rate region for SD-RCPM with partial CSI at the source and relay was derived  using CF scheme. We also derived an achievable rate region for Gaussian version of SD-RCPM with non-causal perfect CSI only at the source. In ongoing work, we are using  DF scheme for both  discrete memoryless and Gaussian version of SD-RCPM with causal and non-causal CSI.

\end{document}